\newcommand{\subparagraph}{}
\titlespacing*{\section}{15pt}{1.2\baselineskip}{0.9\baselineskip}
\newcommand{\myhash}{%
	{\settoheight{\dimen0}{C}\kern-.05em\, \resizebox{!}{\dimen0}{\raisebox{\depth}{\#}}}}
\newcommand{\Sigmay}{{\Sigmay}_{\yv}}
\def\mindex#1{\index{#1}}
\def\sq{\hbox{\rlap{$\sqcap$}$\sqcup$}}
\def\qed{\ifmmode\sq\else{\unskip\nobreak\hfil
\penalty50\hskip1em\null\nobreak\hfil\sq
\parfillskip=0pt\finalhyphendemerits=0\endgraf}\fi\medskip}
\long\def\defbox#1{\framebox[.9\hsize][c]{\parbox{.85\hsize}{%
\parindent=0pt
\baselineskip=12pt plus .1pt      
\parskip=6pt plus 1.5pt minus 1pt 
 #1}}}
\long\def\beginbox#1\endbox{\subsection*{}%
\hbox{\hspace{.05\hsize}\defbox{\medskip#1\bigskip}}%
\subsection*{}}
\def\endbox{}
\newsavebox{\junk}
\savebox{\junk}[1.6mm]{\hbox{$|\!|\!|$}}
\def\det{{\mathop{\rm det}}}
\def\Re{\field{R}}
\def\bC{{\mathbb C}}
\def\bE{{\mathbb E}}
\def\bR{{\mathbb R}}
\def\scrC{{\mathscr{C}}}
\def\sfH{{\sf H}}
\def\bfmath#1{{\mathchoice{\mbox{\boldmath$#1$}}%
{\mbox{\boldmath$#1$}}%
{\mbox{\boldmath$\scriptstyle#1$}}%
{\mbox{\boldmath$\scriptscriptstyle#1$}}}}
\def\bfmY{\bfmath{Y}}
\def\bfmhhaY{\bfmath{\hhaY}} 
\def\bfmhhaY{\hbox to 0pt{$\widehat{\bfmY}$\hss}\widehat{\phantom{\raise 1.25pt\hbox{$\bfmY$}}}}
\def\til={{\widetilde =}}
 \def\FRAC#1#2#3{\genfrac{}{}{}{#1}{#2}{#3}}
\def\ddtp{{\mathchoice{\FRAC{1}{d^{\hbox to 2pt{\rm\tiny +\hss}}}{dt}}%
{\FRAC{1}{d^{\hbox to 2pt{\rm\tiny +\hss}}}{dt}}%
{\FRAC{3}{d^{\hbox to 2pt{\rm\tiny +\hss}}}{dt}}%
{\FRAC{3}{d^{\hbox to 2pt{\rm\tiny +\hss}}}{dt}}}}
\def\average#1,#2,{{1\over #2} \sum_{#1}^{#2}}
\def\eye(#1){{\bf(#1)}\quad}
\newtheorem{theorem}{{\bf Theorem}}
\newtheorem{definition}{{\bf Definition}}
\newtheorem{lemma}{{\bf Lemma}}
\def\eq#1/{(\ref{e:#1})}
\newcommand{\beqn}[1]{\notes{#1}%
\begin{eqnarray} \elabel{#1}}
\newcommand{\eeqn}{\end{eqnarray} }
\newcommand{\beq}[1]{\notes{#1}%
\begin{equation}\elabel{#1}}
\newcommand{\eeq}{\end{equation}}
\def\bdes{\begin{description}}
\def\edes{\end{description}}
\newcounter{rmnum}
\newcounter{anum}
\def\ass(#1:#2){(#1\ref{#1:#2})}
\def\ritem#1{
\item[{\sf \ass(\current_model:#1)}]
}
\newenvironment{recall-ass}[1]{%
\begin{description}
\def\current_model{#1}}{
\end{description}
}
\def\herm{{\sfH}}
\newcommand{\normd}[1]{{\left\vert\kern-0.25ex\left\vert\kern-0.25ex\left\vert #1 
		\right\vert\kern-0.25ex\right\vert\kern-0.25ex\right\vert}}
\newcommand\blfootnote[1]{%
	\begingroup
	\renewcommand\thefootnote{}\footnote{#1}%
	\addtocounter{footnote}{-1}%
	\endgroup
}
\long\def\comment#1{}
\newcommand{\av}{{\bf a}}
\newcommand{\dv}{{\bf d}}
\newcommand{\hv}{{\bf h}}
\newcommand{\rv}{{\bf r}}
\newcommand{\sv}{{\bf s}}
\newcommand{\uv}{{\bf u}}
\newcommand{\xv}{{\bf x}}
\newcommand{\yv}{{\bf y}}
\newcommand{\Am}{{\bf A}}
\newcommand{\Bm}{{\bf B}}
\newcommand{\Dm}{{\bf D}}
\newcommand{\Fm}{{\bf F}}
\newcommand{\Hm}{{\bf H}}
\newcommand{\Sm}{{\bf S}}
\newcommand{\Tm}{{\bf T}}
\newcommand{\Um}{{\bf U}}
\newcommand{\Vm}{{\bf V}}
\newcommand{\Bc}{{\cal B}}
\newcommand{\Cc}{{\cal C}}
\newcommand{\Nc}{{\cal N}}
\newcommand{\Pc}{{\cal P}}
\newcommand{\Uc}{{\cal U}}
\newcommand{\lambdav}{\hbox{\boldmath$\lambda$}}
\newcommand{\xiv}{\hbox{\boldmath$\xi$}}
\newcommand{\sigmav}{\hbox{\boldmath$\sigma$}}
\newcommand{\Lambdam}{\hbox{\boldmath$\Lambda$}}
\newcommand{\Sigmam}{\hbox{\boldmath$\Sigma$}}
\newcommand{\Xim}{\hbox{\boldmath$\Xi$}}
\renewcommand{\det}{{\hbox{det}}}
\renewcommand{\Re}{{\rm Re}}
\newcommand{\transp}{{\sf T}}
\renewcommand{\vec}{{\rm vec}}
\title{Joint Approximate Covariance Diagonalization\\ with Applications in MIMO Virtual Beam Design}
\author{\IEEEauthorblockN{Mahdi Barzegar Khalilsarai, Saeid Haghighatshoar\IEEEauthorrefmark{1}, and Giuseppe Caire}
	\vspace{-4mm}\\
	Communications and Information Theory Group, Technische Universit\"{a}t Berlin\\
	Emails: $\{$m.barzegarkhalilsarai, saeid.haghighatshoar, caire$\}$@tu-berlin.de}
\begin{document}

\maketitle

\def\ful{f_\text{ul}}
\def\fdl{f_\text{dl}}
\def\asfc{\scrC}
\def\asful{\scrC_\text{ul}}
\def\asfdl{\scrC_\text{dl}}
\def\fproxy{f_{\sf proxy}}
\begin{abstract}
We study the problem of maximum-likelihood (ML) estimation of an approximate common eigenstructure, i.e. an approximate common eigenvectors set (CES), for an ensemble of covariance matrices given a collection of their associated i.i.d vector realizations. This problem has a direct application in multi-user MIMO communications, where the base station (BS) has access to instantaneous user channel vectors through pilot transmission and attempts to perform joint multi-user Downlink (DL) precoding. It is widely accepted that an efficient implementation of this task hinges upon an appropriate design of a set of common ``virtual beams", that captures the common eigenstructure among the user channel covariances. In this paper, we propose a novel method for obtaining this common eigenstructure by casting it as an ML estimation problem. We prove that in the special case where the covariances are jointly diagonalizable, the global optimal solution of the proposed ML problem coincides with the common eigenstructure. Then we propose a projected gradient descent (PGD) method to solve the ML optimization problem over the manifold of unitary matrices and prove its convergence to a stationary point. Through exhaustive simulations, we illustrate that in the case of jointly diagonalizable covariances, our proposed method converges to the exact CES. Also, in the general case where the covariances are not jointly diagonalizable, it yields a solution that \textit{approximately} diagonalizes all covariances. Besides, the empirical results show that our proposed method outperforms the well-known \textit{joint approximate diagonalization of eigenmatrices} (JADE) method in the literature.
\end{abstract}
\begin{keywords}
 joint approximate covariance diagonalization, virtual beam design, common eigenvectors set, maximum likelihood, projected gradient descent. 
\end{keywords}

\section{Introduction}
\blfootnote{\IEEEauthorrefmark{1}Saeid Haghighatshoar is currently with the Swiss Center for Electronics and Microtechnology (CSEM), however his contribution to this work was made while he was with the CommIT group (saeid.haghighatshoar@csem.ch).}
Massive multiple-input multiple-output (MIMO) communication systems have recently received considerable attention by promising unprecedented data rates, network coverage and link reliability \cite{larsson2014massive,bjornson2016massive}. This technology makes use of a large number ($M\gg 1$) of antennas at the base station (BS), enabling the spatial multiplexing of several data streams to a number of users $K$ that share the same time-frequency transmission resource \cite{yang2013performance}. Massive MIMO can be implemented for both narrow-band and wide-band signaling and is compatible with new trends such as millimeter wave (mmWave) communications \cite{heath2016overview}. 

In order to precode effectively the Downlink (DL) data streams to the users, massive MIMO requires some level of channel state information at the transmitter side (CSIT). This can be easily obtained in TDD systems owing to channel reciprocity and calibrated RF front-ends \cite{marzetta2010noncooperative,shepard2012argos}. 
However, in the case where the computation of the precoder is too complex for real-time operations, or in the FDD case, where obtaining accurate and timely CSIT is very difficult, 
several schemes have been proposed in order to exploit the statistical structure of the massive MIMO channels, and in particular to decompose them into ``virtual beam directions", such that 
the precoding and channel estimation can be performed in a virtual beam domain of reduced dimension \cite{adhikary2013joint,haghighatshoar2016massive,haghighatshoar2018low,khalilsarai2018fdd}. In particular, in many typical propagation environments the number of scatterers is much less than the number of array elements, the BS is connected to a user through a narrow angular aperture. Therefore, by a careful choice of the beam-space, the BS can use a variety of sparse signal processing tools to efficiently perform its tasks \cite{haghighatshoar2017massive,khalilsarai2018achieve,nguyen2013compressive,wunder2018hierarchical}. 

Defining the array response vector as $\av (\xiv) = [ e^{j\tfrac{2\pi}{\lambda}\langle \rv_1 , \xiv \rangle},\ldots,e^{j\tfrac{2\pi}{\lambda}\langle \rv_M , \xiv \rangle}, ]^\transp$, the MIMO channel covariance can be written as \cite{khalilsarai2018fdd}
\begin{equation}\label{eq:ch_cov_0}
\widetilde{\Sigmam}  = \bE \left[ \hv \hv^\herm \right] = \int_{\Xim} \gamma (\xiv) \av (\xiv) \av (\xiv)^\herm d\xiv,  
\end{equation}
 where $\lambda$ is the wavelength, $\{\rv_m \}_{m=1}^M$ denotes antenna locations, $\xiv \in \Xim$ is the angle of arrival (AoA) and $\gamma (\xiv)$ is the angular power spread function. 
The beam-space for a user is characterized by a set of vectors, called \textit{virtual beams} that span the column space of $\widetilde{\Sigmam}$. The ``best" choice for such vectors is given by the eigenvectors of $\widetilde{\Sigmam}$ denoted as columns of the matrix $\Um = [\uv_1,\ldots,\uv_M]$, which also comprise the basis for the Karhunen-L{\`o}eve (KL) expansion of $\hv$ as
$
\hv = \sum_{m=1}^{M} w_m \uv_m,
$
where $w_m \in \bC, \, m=1,\ldots,M$ are uncorrelated random variables. The KL expansion provides the best $k$-term approximation for $\hv$ and has, exclusive to itself, the property of decorrelating $\hv$ into orthonormal vectors $\{ \uv_m \}_{m=1}^M$ with uncorrelated coefficients $\{w_m \}_{m=1}^M$ \cite{unser2014introduction}. For Gaussian channels, widely assumed in wireless multipath channel models, this property is even more pronounced as the coefficients become statistically independent. The variance of a coefficient $\sigma_m^2 =\bE [ |w_m|^2 ] $ represents the channel ``power" along virtual beam $\uv_m$ and reveals the channel angular sparsity, such that if the power along a beam is less than a threshold ($\sigma_m^2 < \epsilon$), one can assume that the BS receives almost no energy from that beam.

While obtaining the set of virtual beams is easy for a single user, it is not straightforward when one deals with multiple users, since in general the users do not share the same set of eigenvectors. On the other hand, having a suitable set of common virtual beams (CVB) is highly desirable for DL channel training and multi-user precoding \cite{sayeed2013beamspace,khalilsarai2018fdd}. Most of the works in the literature \textit{assume} the existence of a common basis of orthogonal beam directions that (approximately) diagonalizes all user covariances and represents a sort of common eigenvectors set (CES) \cite{heath2016overview}. For the simplest case of a \textit{uniform linear array} (ULA), it is typically taken for granted that the CVB is given by a uniform sampling of the array response vector, which in turn is equivalent to the DFT basis. The implicit (and seldom mentioned) reason for this assumption comes from an asymptotic ($M\to \infty$) equivalence between Toeplitz and circulant matrices according to the Szeg{\"o} theorem (see \cite{adhikary2013joint} and references therein). Since the covariance of the channel associated with a ULA is Toeplitz, and since circulant matrices are diagonalized by the DFT matrix, it holds that DFT columns form an approximate set of eigenvectors for any ULA covariance. As this theoretic result holds for large ULAs, the choice of a suitable set of virtual beams for small or moderate-sized arrays is unclear. Besides, for array geometries other than the ULA, such as circular, planar and cylindrical arrays, no universal virtual beam design is known. 
%

This paper provides a solution for the problem above. Given random channel realizations $\hv_k^{(1)},\, \hv_k^{(2)},\, \ldots, \, \hv_k^{(N)}$ of a set of users $k=1,\ldots,K$ and without knowing the array geometry we propose a rigorous method for obtaining an orthogonal CVB set. Note that this problem is more general compared to the case in which we have user covariances $\widetilde{\Sigmam}_k,\, k=1,\ldots, K$. Our method is based on the maximum-likelihood (ML) estimation of the \textit{postulated} CES given random channel realizations. We formulate the ML problem as an optimization over the unitary manifold and proposed a \textit{projected gradient descent} (PGD) method to solve it. We prove the convergence of this algorithm to a stationary point of the likelihood function with arbitrary initialization. We also show that, with jointly diagonalizable covariances, the CES coincides with the global maximizer of the likelihood function. Finally, we compare our method to the classic joint approximate diagonalization of eigenmatrices (JADE) algorithm \cite{cardoso1993blind} to show its superior performance. 

\section{System Setup}
Consider the following scenario: we observe $N$ random realizations for each of the $K$ independent, stationary, zero-mean vector Gaussian processes as
\begin{equation}\label{eq:vector_observations}
\Hm_k  = \left[ \hv_k^{(1)},\ldots,\hv_k^{(N)} \right] \in \bC^{M\times N},\, k=1,\ldots,K. 
\end{equation} 
The covariance matrix of process $k$ is denoted by $\widetilde{\Sigmam}_k = \bE [\hv_k \hv_k^\herm  ]$. The random vector $\hv_k$ can represent, for instance, the channel vector of a user $k$ communicating with a base station (BS) equipped with $M$ antennas. The eigendecomposition of  $\widetilde{\Sigmam}_k$ is given as
\begin{equation}\label{eq:eigdecomp_true}
\widetilde{\Sigmam}_k = \Um_k \widetilde{\Lambdam}_k \Um_k^\herm, 
\end{equation}
where $\Um_k$ is the unitary matrix of eigenvectors ($\Um_k^\herm \Um_k=\mathbf{I}_M$) and $\widetilde{\Lambdam}_k$ is the diagonal matrix of eigenvalues. 

We are interested in obtaining a (approximate) common eigenstructure\footnote{We use the terms ``common eigenstructure", ``common eigenvectors set", and ``common virtual beams set" interchangeably.} among all covariances $\{ \widetilde{\Sigmam}_k \}_{k=1}^K$ given random samples $\{\Hm_k\}_{k=1}^K$. If the covariance matrices are jointly diagonalizable, i.e. if there exists a unitary matrix $\Um^c$ such that $\Um_1=\Um_2=\ldots=\Um_K=\Um^c$, then it is desirable to obtain $\Um^c$ as the common eigenstructure. If the covariances are \textit{not} jointly diagonalizable, then we want to obtain a unitary matrix $\Um^\star$ as the common eigenstructure that best diagonalizes the covariances.

 To have a systematic way of estimating the common eigenstructure, we first impose the joint diagonalizability criterion on the estimation model, in which each covariance is decomposed as
\begin{equation}\label{eq:decomp}
 \Sigmam_k = \Um \Lambdam_k \Um^\herm, 
\end{equation}
where $\Um=[\uv_1,\ldots,\uv_M]\in \bC^{M\times M}$ is a unitary matrix ($\Um^\herm \Um=\mathbf{I}_M$) representing the to-be-estimated common eigenstructure (or CVB set) and, assuming non-singular covariances for simplicity, $\Lambdam_k = \text{diag}(\lambdav_k)$ is a diagonal matrix with positive diagonal elements given in the vector $\lambdav_k$ for $k=1,\ldots,K$. Note the difference between the true covariance $\widetilde{\Sigmam}_k$ in \eqref{eq:eigdecomp_true} and the hypothetical covariance $\Sigmam_k$ in \eqref{eq:decomp}. Then, given random samples as in \eqref{eq:vector_observations}, we solve a maximum-likelihood optimization to estimate $\Um$. This approach can be seen as a \textit{deliberately mismatched} ML estimation method: the covariances do \textit{not} generally share the same set of eigenvectors, but we impose this property by adopting the model in \eqref{eq:decomp}. Notice that by imposing such model mismatch, we look for the common unitary matrix $\Um$ that best fits the sample data $\Hm_1,\ldots, \Hm_K$, where ``best fit" is to be interpreted in the Maximum Likelihood sense. 


Observing the random samples $\{ \Hm_k \}_{k=1}^K$, one can write the likelihood function as a function of the hypothetical covariance matrices according to
\begin{equation}\label{eq:ML_func}
\begin{aligned}
& p\left(\{\Hm_k \}_{k=1}^K |\{ \Sigmam_k\}_{k=1}^K \right) =\\
 &\prod_{k=1}^{K} \frac{1}{(2\pi)^{\frac{M}{2}} \det (\Sigmam_k)^{\frac{N}{2}}}\exp \left( -\frac{1}{2N}\text{trace}(\Hm_k^\herm \Sigmam_k^{-1} \Hm_k)\right)=\\
& \prod_{k=1}^{K} \frac{1}{(2\pi)^{\frac{M}{2}} \det (\Sigmam_k)^{\frac{N}{2}}}\exp \left( -\frac{1}{2}\text{trace}( \Sigmam_k^{-1}   \widehat{\Sigmam}_k )\right),
\end{aligned}
\end{equation}
where we have defined $\widehat{\Sigmam}_k:= \frac{1}{N} \Hm_k \Hm_k^\herm$, as the sample covariance of the $k$-th process. Taking the $-\log(\cdot)$ of the likelihood function, scaling it, omitting constant terms, and replacing $\Sigmam_k$ with $\Um \Lambdam_k \Um^\herm$ from \eqref{eq:decomp}, one can show that maximizing the likelihood function is equivalent to minimizing the following ML cost as a function of the parameters $\Um$ and $ \{ \lambdav_k \}_{k=1}^K$:
\begin{equation}\label{eq:cost_func}
\begin{aligned}
& \Cc \left(\Um , \{ \lambdav_k \}_{k=1}^K \right) =\\
&\scalebox{0.92}{$\sum_{k=1}^{K} \log\det\, (\Um \text{diag}(\lambdav_k) \Um^\herm) + \text{trace} \left( (\Um \text{diag}(\lambdav_k) \Um^\herm)^{-1} \widehat{\Sigmam}_k \right).$}
\end{aligned}
\end{equation}
 Since 
$\det\, (\Um \text{diag}(\lambdav_k) \Um^\herm) =\nolinebreak \det\, (\text{diag}(\lambdav_k) ),$
 and \scalebox{0.95}{$\text{trace} \left( (\Um \text{diag}(\lambdav_k) \Um^\herm)^{-1} \widehat{\Sigmam}_k \right) = \nolinebreak \text{trace} \left(  \text{diag}(\lambdav_k)^{-1} \Um^\herm \widehat{\Sigmam}_k\Um \right),$} we have
\begin{equation}
\Cc \left(\Um , \{ \lambdav_k \}_{k=1}^K \right) = \sum_{m,k} \log \lambdav_{k,m} + \frac{\uv_m^\herm \widehat{\Sigmam}_k \uv_m}{\lambdav_{k,m}},
\end{equation}
where $\lambdav_{k,m}> 0$ is the $m$-th element of $\lambdav_k$. The ML optimization problem then can be formulated as
\begin{equation}\label{eq:ML_formula_1}
\begin{aligned}
&\underset{\{ \uv_m \}_{m=1}^M ,\{\lambdav_k \}_{k=1}^K}{\text{minimize}} && \sum_{m,k} \log \lambdav_{k,m} + \frac{\uv_m^\herm \widehat{\Sigmam}_k \uv_m}{\lambdav_{k,m}}\\
&\hspace{6mm} \text{subject to} && \uv_m^\herm \uv_n = \delta_{m,n}
\end{aligned}
\end{equation}
For simplicity we assume that all sample covariance matrices are non-singular, such that $\uv_m^\herm \widehat{\Sigmam}_k \uv_m>0$ for any $\uv_m$. As a result, it is easy to show that, for given $\{ \uv_m\}_{m=1}^M$, the function $g(x)=\log x + \frac{\uv_m^\herm \widehat{\Sigmam}_k \uv_m}{x},\, x>0$ achieves its minimum at $x=\uv_m^\herm\widehat{ \Sigmam}_k \uv_m$. Therefore, we take the minimization in \eqref{eq:ML_formula_1} first with respect to $\lambdav_{k,m}, \, k\in [K],\, m\in [M]$ (for an integer $n$, we define $[n]:=\{1,\ldots,n \}$) and transform \eqref{eq:ML_formula_1} to 
\begin{equation}\label{eq:ML_formula_2}
\begin{aligned}
&\underset{\{ \uv_m \}_{m=1}^M }{\text{minimize}} && f(\Um)=\sum_{m,k} \log \left( \uv_m^\herm \widehat{\Sigmam}_k \uv_m\right) \\
& \text{subject to} && \uv_m^\herm \uv_n = \delta_{m,n}
\end{aligned}\tag{$P_{ ML}$}
\end{equation}
This presents an optimization problem over the manifold of unitary matrices $\Uc = \{ \Um \in \bC^{M\times M}:\, \Um^\herm \Um =\mathbf{I}_M \}$. To solve \ref{eq:ML_formula_2}, we propose a gradient projection method and show that it converges to a stationary point of the cost function $f(\Um)$.
\subsection{Intermezzo: Jointly Diagonalizable Covariances and Global Optimality of the CES}
In the discussion above, we said that for jointly diagonalizable covariances, we with to obtain the CES $\Um^c$ as a result of our estimation method. In order to show that the ML problem \eqref{eq:ML_formula_2} gives a reasonable framework to satisfy this property, here we prove that for jointly diagonalizable covariances and by assuming sample covariances to have converged ($\widehat{ \Sigmam}_k = \Sigmam_k$), the CES $\Um^c$ is in fact a global minimizer of $P_{ML}$. Let the set of covariances $\Sigmam_k,~k\in [K]$ to be decomposed as
\begin{equation}\label{eq:commut_cov}
\Sigmam_k = \Um^c \Lambdam_k \Um^{c\, \herm},
\end{equation} 
where $\Um^c \in \bC^{M\times M}$ denotes the CES. Assume $\widehat{ \Sigmam}_k = \Sigmam_k$, and consider the following definition. 
\begin{definition}[Majorization]
	For $\xv \in \bR^{M}$, define $\xv^{\downarrow}$ as the vector containing the elements of $\xv$ in descending order. Let $\yv \in \bR^M$ be another vector such that $\sum_{i=1}^{M}\xv_i =\sum_{i=1}^{M}\yv_i  $. We say $\xv$ \textit{majorizes} $\yv$ ($\xv \succ \yv$) iff 
	\[ \sum_{i=1}^{m}\xv^{\downarrow}_i \ge \sum_{i=1}^{m}\yv^{\downarrow}_i,  \]
	for all $m\in [M]$.
\end{definition}
We have the following theorem on global optimality of $\Um^c$.
\begin{theorem}\label{thm_global_opt}
	Let $\Sigmam_k,\, k=1,\ldots, K$ be a set of jointly diagonalizable covariance matrices as in \eqref{eq:commut_cov} and consider the optimization problem
	\begin{equation}\label{eq:opt_new}
	\underset{\Um}{\text{minimize}}\, f(\Um)=\sum_{m,k} \log \left( \uv_m^\herm \Sigmam_k \uv_m  \right) ~\text{s.t.}~ \uv_m^\herm \uv_n = \delta_{m,n}.
	\end{equation}
	Then $\Um^\star = \Um^c$ is a global solution of \eqref{eq:opt_new}.
\end{theorem}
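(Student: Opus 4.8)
The plan is to exploit joint diagonalizability to collapse the objective onto a majorization inequality between eigenvalue vectors. First I would parametrize any feasible $\Um$ relative to the common eigenbasis by setting $\Qm := \Um^{c\,\herm}\Um$, which is again unitary since both factors are. Writing $\Qm = [\qv_1,\ldots,\qv_M]$, substituting $\uv_m = \Um^c\qv_m$, and using $\Sigmam_k = \Um^c\Lambdam_k\Um^{c\,\herm}$, each summand of the objective becomes
\[
\uv_m^\herm\Sigmam_k\uv_m = \qv_m^\herm\Lambdam_k\qv_m = \sum_{n=1}^M |q_{n,m}|^2\,\lambda_{k,n},
\]
where $\lambda_{k,n} := [\Lambdam_k]_{n,n}>0$ and $q_{n,m}$ is the $(n,m)$ entry of $\Qm$. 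Thus the dependence on $\Um$ enters only through the matrix $P$ with entries $P_{nm} := |q_{n,m}|^2$, which is doubly stochastic because the rows and columns of the unitary $\Qm$ have unit norm. The choice $\Um = \Um^c$ corresponds to $\Qm = \Id$, hence $P = \Id$, giving $f(\Um^c) = \sum_{m,k}\log\lambda_{k,m}$.

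Next I would fix $k$ and define the diagonal values $d_{k,m} := \uv_m^\herm\Sigmam_k\uv_m = \sum_n P_{nm}\lambda_{k,n}$, collected in the vector $\dv_k$. Since $\sum_m d_{k,m} = \tr(\Sigmam_k) = \sum_m \lambda_{k,m}$, the vectors $\dv_k$ and $\lambdav_k$ have equal sums, and the Schur--Horn theorem (equivalently, the fact that $\dv_k = P^\transp\lambdav_k$ exhibits $\dv_k$ as a doubly-stochastic image of $\lambdav_k$, so Hardy--Littlewood--P\'olya applies) yields the majorization $\lambdav_k \succ \dv_k$. Because $x\mapsto\log x$ is concave on $(0,\infty)$, the separable symmetric function $\phi(\xv) = \sum_m\log x_m$ is Schur-concave, so $\lambdav_k\succ\dv_k$ implies $\phi(\lambdav_k)\le\phi(\dv_k)$, i.e.
\[
\sum_{m}\log\lambda_{k,m}\;\le\;\sum_{m}\log d_{k,m}.
\]
Summing this over $k=1,\ldots,K$ gives $f(\Um^c)\le f(\Um)$ for every feasible $\Um$, which is exactly the assertion that $\Um^\star=\Um^c$ is a global minimizer.

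The main conceptual obstacle is that $f$ is non-convex over the unitary manifold, so no local or convexity argument can certify global optimality directly; the crux is to recognize that after the change of variables the problem reduces to comparing $\phi$ evaluated on $\lambdav_k$ versus its doubly-stochastic image, at which point Schur-concavity of the $\log$-sum does all the work. A minor point to check along the way is positivity of the $d_{k,m}$ (so the logarithms are finite), which follows from positive definiteness of the $\Sigmam_k$. I would also note that the statement claims only that $\Um^c$ is \emph{a} global solution, so no uniqueness argument is needed: column permutations and per-column phase rotations leave both the constraint and $f$ invariant, and degenerate eigenvalue patterns can create further minimizers.
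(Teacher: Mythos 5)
Your proof is correct and follows essentially the same route as the paper's: both reduce the claim to the majorization $\lambdav_k \succ \dv_k$ (eigenvalues majorize the diagonal values $\uv_m^\herm\Sigmam_k\uv_m$) and then invoke Schur-concavity of $\sum_m \log x_m$. The only difference is cosmetic—where the paper asserts the majorization via ``properties of eigenvalue decomposition,'' you spell it out through the doubly stochastic matrix $P_{nm}=|q_{n,m}|^2$ and Hardy--Littlewood--P\'olya, which is exactly the standard proof of that step.
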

\begin{proof}
For any unitary $\Um$, define the vector $\sigmav_k (\Um) \in \bR^M$ where $[\sigmav_k (\Um)]_{m} = \uv_m^\herm \Sigmam_k \uv_m$. In particular $\sigmav_k (\Um^c)$ is the vector of eigenvalues of $\Sigmam_k$. Using the properties of eigenvalue decomposition one can show $\sigmav_k (\Um^c) \succ \sigmav_k (\Um)$ for all $\Um \in \Uc$ and all $k\in [K]$.
	In addition, the function $h(\xv) = \sum_i \log (\xv_i)$ is Schur-concave \cite{peajcariaac1992convex} and therefore $\sum_m \log ([\sigmav_k (\Um^c)]_m) \le \sum_m \log ([\sigmav_k (\Um)]_m)$. Hence, $f (\Um^c)\le f (\Um)$ for all $\Um \in \Uc$, proving $\Um^c$ to be the global minimizer of $f(\Um)$ over $\Uc$.
\end{proof}

\section{ML via Projected Gradient Descent}
The projected gradient descent method (PGD) is a well-known iterative optimization algorithm \cite{bertsekas2015convex}. Starting from an initial point $\Um^{(0)}$, this method consists of the two following steps per iteration:
\begin{equation}\label{eq:PGD_step1}
\widetilde{\Um}^{(t)}  = \Um^{(t)} - \alpha_t \nabla f(\Um^{(t)})\tag{Gradient Step}
\end{equation}
\begin{equation}\label{eq:PGD_step2}
\Um^{(t+1)} = \Pc_{\Uc} (\widetilde{\Um}^{(t)}) \tag{Projection Step}
\end{equation}
where $\alpha_t>0$ is a step size, $\nabla f(\Um^{(t)})\in \bC^{M\times M}$ is the gradient of $f$ at $\Um^{(t)}$ and $\Pc_{\Uc}:\bC^{M\times M}\to \Uc $ denotes the orthogonal projection operator onto the set of unitary matrices. The explicit expression of this operator is given in \cite{manton2002optimization}; Nevertheless, we derive it here through the following lemma for the sake of completeness. 
\begin{lemma}
	Let $\Vm \in \bC^{M\times M}$ be a matrix with singular value decomposition $\Vm = \Sm \Dm \Tm^\herm$ where $\Sm$ and $\Tm$ are unitary matrices of left and right eigenvectors and $\Dm=\text{diag}(\dv)$ is non-negative diagonal. Then, the orthogonal projection of $\Vm$ onto the set of unitary matrices is given by $\Pc_{\Uc}(\Vm) =\Sm \Tm^\herm$.
\end{lemma}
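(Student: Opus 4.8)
The plan is to recall that the orthogonal projection onto $\Uc$ is by definition the nearest unitary matrix to $\Vm$ in Frobenius norm, i.e. $\Pc_{\Uc}(\Vm) = \argmin_{\Um \in \Uc} \| \Vm - \Um \|_F^2$, and to show this minimizer equals $\Sm \Tm^\herm$. First I would expand the squared distance as $\| \Vm - \Um\|_F^2 = \trace(\Vm^\herm \Vm) - 2\Re[\trace(\Um^\herm \Vm)] + \trace(\Um^\herm \Um)$. Since $\Um$ is unitary, $\trace(\Um^\herm \Um) = M$, and $\trace(\Vm^\herm \Vm)$ does not depend on $\Um$; hence minimizing the distance is equivalent to maximizing $\Re[\trace(\Um^\herm \Vm)]$ over $\Um \in \Uc$.

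Next I would substitute the SVD $\Vm = \Sm\Dm\Tm^\herm$ and use cyclicity of the trace to write $\trace(\Um^\herm \Vm) = \trace(\Zm \Dm)$, where $\Zm := \Tm^\herm \Um^\herm \Sm$ is again unitary, being a product of unitary matrices. Because $\Dm = \text{diag}(\dv)$ is diagonal, $\trace(\Zm \Dm) = \sum_m \dv_m \Zm_{mm}$, so the objective becomes $\Re[\sum_m \dv_m \Zm_{mm}] = \sum_m \dv_m \Re[\Zm_{mm}]$, where I use that the singular values satisfy $\dv_m \ge 0$.

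The core estimate is that every diagonal entry of a unitary matrix obeys $|\Zm_{mm}| \le 1$, since the $m$-th column of $\Zm$ has unit Euclidean norm. Combined with $\dv_m \ge 0$, this yields $\sum_m \dv_m \Re[\Zm_{mm}] \le \sum_m \dv_m = \trace(\Dm)$, an upper bound independent of $\Um$. I would then verify that $\Um = \Sm \Tm^\herm$ attains this bound: it gives $\Zm = \Tm^\herm (\Sm \Tm^\herm)^\herm \Sm = \Tm^\herm \Tm \Sm^\herm \Sm = \mathbf{I}_M$, so every $\Zm_{mm} = 1$ and the objective equals $\trace(\Dm)$. Hence $\Sm \Tm^\herm$ maximizes $\Re[\trace(\Um^\herm \Vm)]$ and therefore minimizes the distance, establishing $\Pc_{\Uc}(\Vm) = \Sm \Tm^\herm$.

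The only delicate point is the equality/uniqueness discussion, which I expect to be the main (though mild) obstacle. The bound is saturated exactly when $\Re[\Zm_{mm}] = 1$, and since $\Re[\Zm_{mm}] \le |\Zm_{mm}| \le 1$ this forces $\Zm_{mm} = 1$ for every index $m$ with $\dv_m > 0$; when $\Vm$ has full rank (all $\dv_m > 0$) the unit-column-norm property then forces the off-diagonal entries to vanish, so $\Zm = \mathbf{I}_M$ and the projection is the unique point $\Sm \Tm^\herm$. If $\Vm$ is rank-deficient the minimizer need not be unique, but $\Sm \Tm^\herm$ remains a valid global minimizer, so I would simply record that $\Pc_{\Uc}(\Vm) = \Sm \Tm^\herm$ is always an orthogonal projection, which is all the PGD iteration requires.
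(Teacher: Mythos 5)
Your proof is correct, and it reaches the same reduction as the paper but closes it with a different key estimate. Both arguments expand $\|\Vm-\Um\|_F^2$ and reduce the problem to maximizing $\Re[\trace(\Um^\herm\Vm)]$ over $\Um\in\Uc$, and both verify attainment at $\Sm\Tm^\herm$ by direct substitution. The difference is how the upper bound $\Re[\trace(\Um^\herm\Vm)]\le\sum_m \dv_m$ is obtained: the paper invokes Von Neumann's trace inequality $|\trace(\Um^\herm\Vm)|\le\langle \sv_{\Um},\dv\rangle$ (citing Mirsky) and specializes it to unitary $\Um$, whose singular values are all one; you instead prove exactly the special case needed, elementarily, via the change of variables $\Zm=\Tm^\herm\Um^\herm\Sm$ (unitary), the identity $\trace(\Zm\Dm)=\sum_m \dv_m\Zm_{mm}$, and the observation that every entry of a unitary matrix has modulus at most $1$. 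Your route is self-contained and avoids an external citation, at the cost of a few extra lines; the paper's is shorter but rests on a quoted inequality. Your closing discussion of uniqueness (unique minimizer when all $\dv_m>0$, existence but possible non-uniqueness in the rank-deficient case) goes beyond what the paper records and is a worthwhile addition, since it makes precise in what sense $\Pc_{\Uc}$ is well defined as used by the PGD iteration.
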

\begin{proof}
	The orthogonal projection of $\Vm$ is given by the minimizer of $g(\Um)= \Vert \Vm -\Um\Vert_F^2$ over the set of unitary matrices, where $\Vert\cdot \Vert_F^2$ denotes Frobenius norm and $\Um^\herm\Um=\mathbf{I}_M$. We can write
	\begin{equation}\label{eq:fro}
	\begin{aligned}
	g(\Um)= \Vert \Vm -\Um\Vert_F^2 & = \Vert \Um\Vert_{\sf F}^2 + \Vert \Vm \Vert_{\sf F}^2 - 2\Re\{ \langle\Vm , \Um \rangle \}\\
	&  =M + \Vert \Vm \Vert_{\sf F}^2 - 2\Re\{ \langle\Vm , \Um \rangle \},
	\end{aligned}
	\end{equation}
	where the inner product is defined as $\langle\Vm , \Um \rangle  = \text{trace} (\Um^\herm \Vm) $ and we used the fact that $ \text{trace} (\Um \Um^\herm)=\text{trace}(\mathbf{I}_M)=M$. According to Von Neumann's trace inequality we have 
$| \langle\Vm , \Um \rangle  |=|\text{trace}\left( \Um^\herm\Vm \right)|\le \langle \sv_{\Um},\dv \rangle,  $\cite{mirsky1975trace}, where $\sv$ denotes the singular values vector of $\Um$. In the special case where $\Um$ is unitary, we have $\sv_{\Um} = [1,\ldots,1]^\transp$ and 
	$|\langle\Vm , \Um \rangle |\le  \langle \sv_{\Um},\dv \rangle = \sum_i \dv_i.$
	Now, using \eqref{eq:fro} we have  
$	g(\Um)\ge M +\Vert \Vm\Vert_{\sf F}^2 -2\sum_i \dv_i,$
where the right hand side of the inequality is independent of $\Um$. We show that the lower bound on $g(\Um)$ is achieved by $\Um^\star = \Sm \Tm^\herm$. This is seen by the fact that
	\begin{equation}
	\begin{aligned}
	g(\Sm \Tm^\herm )&= M + \Vert \Sm \Dm \Tm^\herm\Vert_{\sf F}^2 - 2\Re\{ \langle\Sm \Dm \Tm^\herm , \Sm  \Tm^\herm \rangle \}\\
	& = M + \Vert \Sm \Dm \Tm^\herm\Vert_{\sf F}^2 - 2\text{trace}(\Dm)\\
	&= M + \Vert\Vm \Vert_{\sf F}^2 - 2\sum_i \dv_i,.
	\end{aligned}
	\end{equation}
This completes the proof.
	\end{proof} 
This lemma provides an explicit formula for the orthogonal projection of a matrix into the unitary set as
\begin{equation}\label{eq:orthogonal_proj}
\Pc_{\Uc}:\bC^{M\times M}\to \Uc,\, \Vm = \Sm \Dm \Tm^\herm\to \Sm \Tm^\herm.
\end{equation}
The following theorem presents the main result of this work.
\begin{theorem}\label{th:convergance}
	Let $\Um^{(0)}\in \Uc$ be an initial point and consider the gradient projection update rule 
	\begin{equation}\label{eq:grad_proj}
	\Um^{(t+1)} = \Pc_{\Uc} \left( \Um^{(t)} - \alpha_t \nabla f (\Um^{(t)})  \right),\,\, t=0,1,\ldots,
	\end{equation}
	 with $\alpha_t \in (0,\frac{1}{L})$ for all $t$, where $L$ is the Lipschitz constant of $\nabla f(\Um)$. Then the sequence $\{ \Um^{(t)},\, t=0,1,\ldots \}$ converges to a stationary point of $f(\Um)$.
\end{theorem}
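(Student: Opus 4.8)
The plan is to prove a sufficient-decrease inequality for a single PGD step, deduce from it that successive iterates get arbitrarily close, and then invoke compactness of $\Uc$ to extract a limit point that can be shown to be stationary. The argument is the manifold analogue of the classical Euclidean convergence proof, but the non-convexity of $\Uc$ forces us to replace the usual variational inequality for projections with the bare minimizing property of $\Pc_\Uc$ established in the preceding lemma.

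First I would record the descent lemma: since $\nabla f$ is $L$-Lipschitz on a neighborhood of the compact set $\Uc$, for any $\Um,\Um'$ one has $f(\Um')\le f(\Um)+\Re\langle\nabla f(\Um),\Um'-\Um\rangle+\tfrac{L}{2}\Vert\Um'-\Um\Vert_F^2$. Applying this with $\Um=\Um^{(t)}$ and $\Um'=\Um^{(t+1)}$ reduces matters to controlling the cross term. For that I would use only that $\Um^{(t+1)}$ is a global minimizer of $\Vert \Um^{(t)}-\alpha_t\nabla f(\Um^{(t)})-\Um\Vert_F^2$ over $\Um\in\Uc$; comparing its value with the value at the feasible point $\Um^{(t)}$ and expanding the squares yields $\Re\langle\nabla f(\Um^{(t)}),\Um^{(t+1)}-\Um^{(t)}\rangle\le-\tfrac{1}{2\alpha_t}\Vert\Um^{(t+1)}-\Um^{(t)}\Vert_F^2$. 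Crucially this step never uses convexity of $\Uc$, only that the projection attains the minimum distance. Substituting into the descent lemma gives $f(\Um^{(t+1)})\le f(\Um^{(t)})-(\tfrac{1}{2\alpha_t}-\tfrac{L}{2})\Vert\Um^{(t+1)}-\Um^{(t)}\Vert_F^2$, and the hypothesis $\alpha_t<\tfrac1L$ makes the coefficient strictly positive, so $f$ is monotonically non-increasing along the iterates.

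Next I would exploit this sufficient decrease. Because $\Uc$ is compact and $f$ is continuous on it (the quadratic forms $\uv_m^\herm\widehat{\Sigmam}_k\uv_m$ are bounded below away from zero on unit vectors when the sample covariances are non-singular), $f$ is bounded below along the iterates, so the monotone sequence $\{f(\Um^{(t)})\}$ converges. Telescoping the decrease inequality then shows $\sum_t(\tfrac{1}{2\alpha_t}-\tfrac{L}{2})\Vert\Um^{(t+1)}-\Um^{(t)}\Vert_F^2<\infty$; keeping the step sizes in a fixed closed subinterval of $(0,\tfrac1L)$ bounds the coefficient below by a positive constant, whence $\Vert\Um^{(t+1)}-\Um^{(t)}\Vert_F\to0$.

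Finally, compactness of $\Uc$ gives a subsequence $\Um^{(t_j)}\to\Um^\star\in\Uc$, and the vanishing of successive differences forces $\Um^{(t_j+1)}\to\Um^\star$ as well. Passing to the limit in the update $\Um^{(t_j+1)}=\Pc_\Uc(\Um^{(t_j)}-\alpha_{t_j}\nabla f(\Um^{(t_j)}))$ would identify $\Um^\star$ as a fixed point $\Um^\star=\Pc_\Uc(\Um^\star-\alpha\nabla f(\Um^\star))$, which by the explicit polar-factor form of $\Pc_\Uc$ in \eqref{eq:orthogonal_proj} is equivalent to $\Um^{\star\herm}\nabla f(\Um^\star)$ being Hermitian, i.e. to the vanishing of the Riemannian gradient of $f$ at $\Um^\star$ — the stationarity condition on $\Uc$. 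I expect this last passage to be the main obstacle: unlike projection onto a convex set, $\Pc_\Uc$ is only continuous where its argument has full rank, so the limit must be taken at a point where $\Um^\star-\alpha\nabla f(\Um^\star)$ is invertible. This invertibility does hold once $\alpha$ is small enough that $\alpha\Vert\nabla f(\Um^\star)\Vert_2<1$, since then $\Um^\star-\alpha\nabla f(\Um^\star)=\Um^\star(\Id-\alpha\Um^{\star\herm}\nabla f(\Um^\star))$ is a product of invertible factors; I would therefore either fold this bound into the admissible range of $\alpha_t$ or verify it directly from the uniform boundedness of $\nabla f$ on the compact set $\Uc$. With continuity of the projection secured at the limit, the fixed-point identity, and hence stationarity of $\Um^\star$, follows.
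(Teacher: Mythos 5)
Your proof is correct in substance and in fact more complete than the paper's own argument, but it takes a genuinely different route after the shared descent lemma. The paper argues in majorize--minimize style: it expands the quadratic majorizer $\fproxy^{(t)}$ over $\Uc$, where $\Vert\Um\Vert_{\sf F}^2=M$ is constant, so that minimizing $\fproxy^{(t)}$ reduces to the linear problem $\max_{\Um\in\Uc}\,\Re\{\langle \Um^{(t)}-\tfrac1L\nabla f(\Um^{(t)}),\Um\rangle\}$, whose solution is exactly the polar factor $\Sm_t\Tm_t^\herm=\Um^{(t+1)}$ delivered by $\Pc_\Uc$; this yields the chain $f(\Um^{(t+1)})\le \fproxy^{(t)}(\Um^{(t+1)})\le \fproxy^{(t)}(\Um^{(t)})=f(\Um^{(t)})$, after which the paper simply asserts that monotonicity plus boundedness below gives convergence to a stationary point. (Note, incidentally, that the paper's argument is run with $\alpha_t=\tfrac1L$, which sits outside the stated hypothesis $\alpha_t\in(0,\tfrac1L)$, whereas your argument respects it.) You instead extract the per-step decrease from the bare distance-minimizing property of $\Pc_\Uc$, comparing the squared distances at $\Um^{(t+1)}$ and at the feasible point $\Um^{(t)}$; this is equally valid on the non-convex set $\Uc$ and buys you a quantitative sufficient-decrease coefficient $\tfrac{1}{2\alpha_t}-\tfrac{L}{2}$, hence the telescoping bound, vanishing successive differences, and the compactness/limit-point analysis. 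That last part is precisely the content missing from the paper: monotone convergence of the values $f(\Um^{(t)})$ alone does not identify any limit of the iterates as stationary, so your proposal actually supplies the justification the paper's final sentence glosses over.

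The one step that needs repair is the passage to the limit through $\Pc_\Uc$, which you correctly flag: $\Pc_\Uc$ is continuous only at full-rank arguments, and the hypothesis $\alpha_t<\tfrac1L$ does \emph{not} imply $\alpha\,\Vert\nabla f(\Um^\star)\Vert_{op}<1$, since $L$ bounds the Hessian, not the gradient; so either of your proposed fixes shrinks the admissible step-size range beyond what the theorem states. A cleaner closure avoids projection continuity altogether: since $\Um^{(t+1)}$ maximizes $\Re\{\langle \Gm_t,\Um\rangle\}$ over $\Uc$ with $\Gm_t=\Um^{(t)}-\alpha_t\nabla f(\Um^{(t)})$, you have $\Re\{\langle \Gm_t,\Um\rangle\}\le\Re\{\langle \Gm_t,\Um^{(t+1)}\rangle\}$ for every fixed $\Um\in\Uc$; passing to the limit along the subsequence (after extracting $\alpha_{t_j}\to\alpha>0$) gives $\Re\{\langle \Gm,\Um\rangle\}\le\Re\{\langle \Gm,\Um^\star\rangle\}$ for all $\Um\in\Uc$, with $\Gm=\Um^\star-\alpha\nabla f(\Um^\star)$. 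Thus $\Um^\star$ is itself a maximizer of this linear functional over the smooth compact manifold $\Uc$, and first-order optimality of that subproblem forces $\Um^{\star\herm}\Gm$, hence $\Um^{\star\herm}\nabla f(\Um^\star)$, to be Hermitian --- your stationarity condition --- with no extra restriction on the step sizes beyond a closed subinterval of $(0,\tfrac1L)$.
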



In order to prove Theorem \ref{th:convergance}, we need to first prove some useful properties of the ML optimization problem.

\subsection{Lipschitz Continuity of the Cost Gradient}
As a first step, we prove that the cost gradient $\nabla f(\Um)$ is Lipschitz continuous over $\Uc$. Note that the manifold $\Uc$ is a subset of the closed convex ball $\Bc$ ($\Uc \subset \Bc$) where 
$\Bc = \{ \Um\, :\, \Vert \Um \Vert_{\sf F}\le \sqrt{M} \}.$
One can show that $f(\Um)$ has Lipschitz continuous gradient over $\Bc$, i.e. there exists a constant $L$, such that
$\Vert \nabla f(\Um )- \nabla f(\Um') \Vert_{\sf F} \le L \Vert \Um - \Um' \Vert_{\sf F},$
for all $\Um,\Um' \in \Bc$. One way to prove this is by showing that the Hessian of $f(\Um)$ has bounded operator norm over $\Bc$. 
 Define the complex Hessian as the $M^2\times M^2$ square matrix $\nabla^2 f(\Um)$ whose elements are given as \cite{gunning2009analytic}
$
 [\nabla^2 f(\Um)]_{m,n} = \frac{\partial^2\, f(\Um)}{\partial [\vec (\Um)]_m  \partial [\vec (\Um)]_n^\ast },$
for $m,n\in [M^2]$, where 
$\vec (\Um) = [\uv_1^\transp,\ldots,\uv_M^\transp]^\transp$ is the vectorized version of $\Um$. Simple calculations show that the Hessian is a block-diagonal matrix with its $m$-th diagonal block given as
\begin{align}\label{eq:Hess_f_0}
\Dm_f^{(m)} = \sum_{k=1}^K
\frac{\widehat{\Sigmam}_k^\transp}{\uv_m^\herm \widehat{\Sigmam}_k \uv_m} - \sum_{k=1}^K \frac{\left( \widehat{\Sigmam}_k \uv_m \uv_m^\herm \widehat{\Sigmam}_k \right)^\transp}{\left(\uv_m^\herm \widehat{\Sigmam}_k \uv_m\right)^2},
\end{align}
\noindent so that we have $\nabla^2 f(\Um) = \text{blkdiag}\left( \Dm_f^{(1)},\ldots,\Dm_f^{(M)} \right)$. Note that both terms on the right-hand-side of \eqref{eq:Hess_f_0} are PSD and therefore $\Dm_f^{(m)}$ is the difference of two PSD matrices.
\begin{lemma}
The Hessian matrix $\nabla^2 f(\Um)$ is bounded in operator norm.
\end{lemma}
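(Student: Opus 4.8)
The plan is to exploit the block-diagonal structure already exhibited: since $\nabla^2 f(\Um) = \text{blkdiag}(\Dm_f^{(1)},\ldots,\Dm_f^{(M)})$, its operator norm equals $\max_{m\in[M]} \Vert \Dm_f^{(m)}\Vert$, so it suffices to bound each diagonal block uniformly over the relevant domain. Because the bound is only needed where $f$ is finite and where the iterates actually live, I would carry out the estimate over the unitary set $\Uc$ (equivalently, on columns of unit norm), where the quadratic forms appearing in the denominators are controlled.

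First I would treat the two summands of $\Dm_f^{(m)}$ separately. Writing $\Am^{(m)} = \sum_k \widehat{\Sigmam}_k^\transp / (\uv_m^\herm \widehat{\Sigmam}_k \uv_m)$ and $\Bm^{(m)} = \sum_k (\widehat{\Sigmam}_k \uv_m \uv_m^\herm \widehat{\Sigmam}_k)^\transp / (\uv_m^\herm \widehat{\Sigmam}_k \uv_m)^2$, both of which are PSD as noted, I use the elementary fact that for PSD $\Am,\Bm$ one has $-\Bm \preceq \Am - \Bm \preceq \Am$, whence $\Vert \Am^{(m)} - \Bm^{(m)}\Vert \le \max(\Vert \Am^{(m)}\Vert, \Vert \Bm^{(m)}\Vert)$. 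This reduces the task to bounding two nonnegative scalars.

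The crucial ingredient is the non-singularity assumption on the sample covariances, which guarantees $\lambda_{\min}(\widehat{\Sigmam}_k) > 0$. On $\Uc$ we have $\Vert \uv_m\Vert = 1$, so $\uv_m^\herm \widehat{\Sigmam}_k \uv_m \ge \lambda_{\min}(\widehat{\Sigmam}_k) > 0$, i.e. the denominators are bounded away from zero. Then $\Vert \Am^{(m)}\Vert \le \sum_k \lambda_{\max}(\widehat{\Sigmam}_k)/\lambda_{\min}(\widehat{\Sigmam}_k)$, while for the rank-one numerators, $\Vert \widehat{\Sigmam}_k \uv_m \uv_m^\herm \widehat{\Sigmam}_k\Vert = \Vert \widehat{\Sigmam}_k \uv_m\Vert^2 \le \lambda_{\max}(\widehat{\Sigmam}_k)^2$ (transposition preserves singular values), giving $\Vert \Bm^{(m)}\Vert \le \sum_k \lambda_{\max}(\widehat{\Sigmam}_k)^2/\lambda_{\min}(\widehat{\Sigmam}_k)^2$. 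Both bounds are independent of $m$ and of $\Um$, so $\max_m \Vert \Dm_f^{(m)}\Vert$, and hence $\Vert \nabla^2 f(\Um)\Vert$, is dominated by a finite constant, which one can take as $L$. As a sanity check, a softer argument also works: the blocks depend continuously on $\Um$ precisely because the denominators never vanish on $\Uc$, and $\Uc$ is compact, so the operator norm attains a finite maximum; the explicit estimate above is preferable only because it produces a usable value of $L$.

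I expect the main subtlety to be the choice of domain rather than any hard inequality. The enclosing ball $\Bc$ mentioned in the surrounding discussion contains matrices with vanishing columns, on which the denominators blow up and $f$ is not even finite, so a genuinely uniform bound cannot hold on all of $\Bc$; the estimate requires columns bounded away from zero, which $\Uc$ supplies. I would therefore state the bound over $\Uc$ (or, if the projected-gradient analysis needs it, over a thin tube around $\Uc$ on which the column norms remain close to one), and I would flag the \emph{non-singularity} of $\{\widehat{\Sigmam}_k\}$ as exactly the hypothesis that rules out degeneracy of the denominators.
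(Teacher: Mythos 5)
Your proof is correct and follows essentially the same route as the paper's: reduce to the diagonal blocks (the operator norm of a block-diagonal matrix is the maximum of the block norms), bound the difference of two PSD matrices by the larger of their operator norms, use the rank-one identity $\Vert \widehat{\Sigmam}_k \uv_m \uv_m^\herm \widehat{\Sigmam}_k\Vert_{op} = \Vert \widehat{\Sigmam}_k \uv_m\Vert^2$, and invoke non-singularity of the sample covariances to keep the denominators positive.

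Where you genuinely add something is the domain discussion, and it is not cosmetic. The paper's proof stops at a bound whose two arguments still contain $\uv_m^\herm \widehat{\Sigmam}_k \uv_m$ and simply declares them finite; that is a pointwise statement in $\Um$ and does not by itself deliver the single uniform constant $L$ that the subsequent Lipschitz lemma and the convergence theorem consume. Your restriction to $\Uc$, where $\Vert \uv_m\Vert = 1$ gives $\uv_m^\herm \widehat{\Sigmam}_k \uv_m \ge \lambda_{\min}(\widehat{\Sigmam}_k) > 0$, turns it into a uniform bound with explicit constants $\sum_k \lambda_{\max}(\widehat{\Sigmam}_k)/\lambda_{\min}(\widehat{\Sigmam}_k)$ and $\sum_k \lambda_{\max}(\widehat{\Sigmam}_k)^2/\lambda_{\min}(\widehat{\Sigmam}_k)^2$, and you are right that no uniform bound can hold on all of $\Bc$, since columns may vanish there. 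One caution on your final remark: the issue you flag actually resurfaces in the paper's next lemma, whose proof integrates $\nabla^2 f$ along segments $\Um' + t(\Um - \Um')$ with $\Um,\Um' \in \Uc$; such segments leave $\Uc$, and the convex hull of the unitary matrices contains singular points (e.g.\ $\mathbf{0} = \frac{1}{2}\Um + \frac{1}{2}(-\Um)$), so a bound on $\Uc$ alone does not suffice there either --- one needs your ``tube'' refinement or a restriction to segments bounded away from column degeneracy. That is a gap in the paper's chain of lemmas rather than in your proof of this statement, but it is worth flagging if your bound is to be reused downstream.
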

\begin{proof}
	Define the operator norm of a matrix $\Am\in \bC^{M^2\times M^2}$ as $\Vert \Am \Vert_{op} = \underset{\Vert\xv \Vert=1}{\sup} \frac{\Vert \Am \xv \Vert}{\Vert\xv \Vert}$, where $\Vert \cdot \Vert$ is the $\ell_2$ norm.
	For a block-diagonal matrix such as $\nabla^2 f (\Um) $, the operator norm is equal to the maximum of the operator norms of each individual block, i.e. $
	\Vert \nabla^2 f (\Um) \Vert_{op} = \underset{m}{\max}~ \Vert \Dm_f^{(m)} \Vert_{op}$.
	Using \eqref{eq:Hess_f_0}, the operator norm of block $m$ is bounded as
\scalebox{0.97}{$\Vert \Dm_f^{(m)} \Vert_{op} \le \max \left\{ \sum_k \frac{\Vert \widehat{\Sigmam}_k \Vert_{op}}{\uv_m^\herm \widehat{\Sigmam}_k \uv_m}\,,\, \sum_k\frac{\Vert \widehat{\Sigmam}_k\uv_m \Vert^2}{(\uv_m^\herm \widehat{\Sigmam}_k \uv_m)^2} \right\}$}
	where we used the fact that $\Dm_f^{(m)}$ is the difference of two PSD matrices and therefore its operator norm is bounded by the maximum of the operator norms of the two. Also, since the matrix $\widehat{\Sigmam}_k \uv_m \uv_m^\herm \widehat{\Sigmam}_k$ is of rank one, its operator norm is equal to $\Vert \widehat{\Sigmam}_k\uv_m \Vert^2$. Finally, since sample covariances are assumed to be non-singular, both arguments in $\max\{\cdot\}$ are finite. Taking the maximum over all $M$ bounds also results in a finite value and the proof is complete.
\end{proof}

Next we show that the Lipschitz constant of $\nabla f (\Um)$ is related to the operator norm of $\nabla^2 f(\Um)$.
\begin{lemma}
For a twice differentiable function $f(\Um)$ with Hessian bounded in operator norm as $\Vert \nabla^2 f(\Um) \Vert_{op} \le L$ for all $\Um$, the gradient $\nabla f (\Um)$ is Lipschitz continuous with Lipschitz constant $L$. 
\end{lemma}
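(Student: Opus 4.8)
The plan is to derive the Lipschitz estimate from the Hessian bound by applying the fundamental theorem of calculus along the straight segment joining the two arguments. The structural fact that makes this work is that, although the unitary manifold $\Uc$ is itself nonconvex, it is contained in the closed ball $\Bc$, which \emph{is} convex; hence for any $\Um,\Um'\in\Uc$ the entire segment $\Um_t:=\Um'+t(\Um-\Um')$, $t\in[0,1]$, stays inside $\Bc$, where the preceding lemma guarantees $\Vert\nabla^2 f(\Um_t)\Vert_{op}\le L$. Without this convex enclosure the mean-value argument could wander out of the region on which the Hessian is controlled.

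Next I would pass to the vectorized picture, viewing $f$ as a function of $\vec(\Um)\in\bC^{M^2}$ so that $\nabla f$ is a vector and $\nabla^2 f$ is the block-diagonal $M^2\times M^2$ matrix of \eqref{eq:Hess_f_0}, which is continuous on $\Bc$ because the quadratic forms $\uv_m^\herm\widehat{\Sigmam}_k\uv_m$ stay positive there. Defining the curve $\phi(t)=\nabla f(\Um_t)$, the chain rule gives $\phi'(t)=\nabla^2 f(\Um_t)\,\vec(\Um-\Um')$, and integrating yields
\[
\nabla f(\Um)-\nabla f(\Um')=\phi(1)-\phi(0)=\int_0^1 \nabla^2 f(\Um_t)\,\vec(\Um-\Um')\,dt.
\]
Taking $\ell_2$ norms, moving the norm inside the integral via the triangle inequality, and invoking the operator-norm bound on each $\nabla^2 f(\Um_t)$ then gives
\[
\Vert\nabla f(\Um)-\nabla f(\Um')\Vert\le \int_0^1\Vert\nabla^2 f(\Um_t)\Vert_{op}\,\Vert\vec(\Um-\Um')\Vert\,dt\le L\,\Vert\vec(\Um-\Um')\Vert.
\]
The proof then closes by identifying the vectorized $\ell_2$ norm with the Frobenius norm, $\Vert\vec(\Um-\Um')\Vert=\Vert\Um-\Um'\Vert_{\sf F}$, which is exactly the claimed bound $\Vert\nabla f(\Um)-\nabla f(\Um')\Vert_{\sf F}\le L\Vert\Um-\Um'\Vert_{\sf F}$.

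The step I expect to require the most care is the chain-rule identity $\phi'(t)=\nabla^2 f(\Um_t)\vec(\Um-\Um')$ in the complex setting, since the Hessian used here is the Wirtinger (mixed-derivative) object rather than an ordinary real symmetric matrix. The cleanest rigorous route is to carry the whole argument out in real coordinates $\bR^{2M^2}$, where $f$ is an ordinary twice-differentiable real function, its real Hessian is symmetric, and the elementary integral estimate above applies verbatim; one then only needs to check that the operator-norm control of the complex Hessian established earlier transfers to the real Hessian that actually governs the variation of the real gradient. Because $f$ is real-valued and depends on $\Um$ only through the scalar quadratic forms $\uv_m^\herm\widehat{\Sigmam}_k\uv_m$, this transfer is a matter of standard Wirtinger bookkeeping, and the constant it produces coincides (up to the conventions fixed in the preceding lemma) with the $L$ that enters the step-size condition $\alpha_t\in(0,\tfrac{1}{L})$ of Theorem~\ref{th:convergance}.
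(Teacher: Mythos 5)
Your proposal is correct and follows essentially the same route as the paper: both express $\nabla f(\Um)-\nabla f(\Um')$ as $\int_0^1 \nabla^2 f\left(\Um'+t(\Um-\Um')\right)\vec(\Um-\Um')\,dt$ along the straight segment (which stays in the convex ball $\Bc$) and bound it by the Hessian operator-norm bound $L$, the only cosmetic difference being that you take the norm of the integral directly via the triangle inequality, whereas the paper dualizes with a supremum over unit-Frobenius-norm matrices $\Bm$ and applies Cauchy--Schwarz. Your closing caveat about the Wirtinger (mixed-derivative) Hessian is a genuine subtlety the paper's proof glosses over, and passing to real coordinates in $\bR^{2M^2}$ is indeed the clean way to make the chain-rule identity rigorous.
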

\begin{proof}
We show that the Lipschitz continuity condition $\Vert  \nabla f(\Um)-\nabla f(\Um') \Vert_{\sf F}\le L \Vert \Um -\Um' \Vert_{\sf F} $ holds for any $\Um,\, \Um'$ via the following sequence of inequalities:
\begin{equation}\label{eq:Ineq_5}
\begin{aligned}
\Vert  \nabla f(\Um)-\nabla f(\Um') \Vert_{\sf F} 
&\overset{(a)}{\le} \underset{\Vert \Bm \Vert_{\sf F}=1}{\sup} \left| \left\langle \Bm , \nabla f(\Um)-\nabla f(\Um') \right\rangle \right|\\
& \hspace{-32mm}\scalebox{0.89}{$=\underset{\Vert \Bm \Vert_{\sf F}=1}{\sup} \left| \left\langle \vec(\Bm) ,\int_{0}^1 \nabla^2 f \left(\Um'+t(\Um - \Um')\right)\text{vec}(\Um - \Um')\, dt \right\rangle \right|$} 
\\
& \hspace{-32mm}\scalebox{0.9}{$\le \underset{\Vert \Bm \Vert_{\sf F}=1}{\sup}\int_{0}^1 \left| \left\langle \vec(\Bm) , \nabla^2 f \left(\Um'+t(\Um - \Um')\right)\text{vec}(\Um - \Um')\,  \right\rangle \right|dt$}\\
&
\hspace{-32mm}\scalebox{0.85}{$\overset{(b)}{\le}\underset{\Vert \Bm \Vert_{\sf F}=1}{\sup}\underset{t\in [0,1]}{\sup} \Vert \nabla^2 f \left(\Um'+t(\Um - \Um')\right) \Vert_{op} \Vert \text{vec}(\Um - \Um') \Vert \Vert  \vec(\Bm) \Vert$}\\
&
\hspace{-32mm}\scalebox{0.81}{$=\underset{t\in [0,1]}{\sup} \Vert \nabla^2 f \left(\Um'+t(\Um - \Um')\right) \Vert_{op} \Vert \text{vec}(\Um - \Um') \Vert $}\overset{(c)}{\le } L \Vert \Um - \Um' \Vert_{\sf F}. \\
\end{aligned}
\end{equation}
Inequality $(a)$ holds by taking into account the fact that for the particular value of $\Bm$ as $\Bm_0 = \frac{\nabla f(\Um)-\nabla f(\Um') }{\Vert  \nabla f(\Um)-\nabla f(\Um') \Vert_{\sf F} }$ we have
$ \langle \Bm_0 , \nabla f(\Um)-\nabla f(\Um') \rangle = \Vert  \nabla f(\Um)-\nabla f(\Um') \Vert_{\sf F}.$ Inequality $(b)$ comes from an application of the Cauchy-Schwarz inequality and the definition of the operator norm. Finally, inequality $(c)$ holds due to the assumption on the boundedness of the Hessian operator norm, i.e. $\Vert\nabla^2 f(\Um)\Vert_{op}\le L$ for all $\Um$, and the proof is complete.
\end{proof}
The next lemma emerges as a consequence of the discussion above.
\begin{lemma}
	For any pair of matrices $\Um,\Um'\in \Bc$ we have
	\begin{equation}\label{eq:descent}
	f(\Um)\le f(\Um') + \langle \nabla f(\Um'),\Um - \Um'\rangle + \frac{L}{2}\Vert\Um-\Um'  \Vert_{\sf F}^2,
	\end{equation}
	where $L$ is the gradient Lipschitz constant.
	\end{lemma}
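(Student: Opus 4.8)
The plan is to prove this \emph{descent lemma} by the standard route: represent the difference $f(\Um)-f(\Um')$ as an integral of the gradient along the segment joining $\Um'$ to $\Um$, and then control the deviation of the gradient from its base-point value using the Lipschitz bound established in the previous lemma. The crucial preliminary observation is that $\Bc$ is convex (it is a Frobenius-norm ball), so for $\Um,\Um'\in\Bc$ the entire segment $\Um'+t(\Um-\Um')$, $t\in[0,1]$, remains in $\Bc$; this guarantees that the estimate $\Vert\nabla f(\Um_1)-\nabla f(\Um_2)\Vert_{\sf F}\le L\Vert\Um_1-\Um_2\Vert_{\sf F}$ is valid at every point of the path we integrate over.

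Concretely, I would set $\phi(t)=f\!\left(\Um'+t(\Um-\Um')\right)$ and apply the fundamental theorem of calculus to obtain
\begin{equation}
f(\Um)-f(\Um')=\int_{0}^{1}\left\langle \nabla f\!\left(\Um'+t(\Um-\Um')\right),\,\Um-\Um'\right\rangle dt.
\end{equation}
Then I would add and subtract $\nabla f(\Um')$ inside the inner product, splitting the right-hand side into the linear term $\langle \nabla f(\Um'),\Um-\Um'\rangle$ plus a remainder $\int_{0}^{1}\langle \nabla f(\Um'+t(\Um-\Um'))-\nabla f(\Um'),\,\Um-\Um'\rangle\,dt$. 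Bounding the remainder by the Cauchy--Schwarz inequality and then invoking the Lipschitz property with increment of size $t\Vert\Um-\Um'\Vert_{\sf F}$ gives the bound $\int_{0}^{1}L\,t\,\Vert\Um-\Um'\Vert_{\sf F}^{2}\,dt=\frac{L}{2}\Vert\Um-\Um'\Vert_{\sf F}^{2}$, which is exactly the quadratic slack appearing in the claimed inequality. Combining the three pieces yields \eqref{eq:descent}.

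The step requiring the most care is not the integration itself but the bookkeeping of the complex structure. Since $f$ is real-valued while its argument $\Um$ is a complex matrix, the inner products here must be read as $\Re\{\text{trace}((\cdot)^\herm(\cdot))\}$ (the same convention implicit in the Lipschitz lemma), and the computation of $\phi'(t)$ must be carried out in the Wirtinger/real-differential sense so that the chain rule produces precisely this real inner product with the gradient $\nabla f$. Once the inner-product convention is pinned down and the convexity of $\Bc$ has been used to keep the segment inside the region where the Lipschitz constant is valid, the remaining manipulations are entirely routine.
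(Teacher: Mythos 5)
Your proof is correct, but note that the paper does not actually prove this lemma: its entire proof is the pointer ``See \cite{bertsekas2015convex}, proposition 6.1.2'', i.e., it defers to the standard descent lemma. What you wrote is precisely the argument behind that citation --- parametrize the segment, apply the fundamental theorem of calculus to $\phi(t)=f(\Um'+t(\Um-\Um'))$, add and subtract $\nabla f(\Um')$, and bound the remainder via Cauchy--Schwarz plus the Lipschitz estimate, integrating $L\,t\,\Vert\Um-\Um'\Vert_{\sf F}^2$ to produce the quadratic term. Your version also pins down two points that the bare citation (stated for real vector spaces) leaves implicit in this setting and that are genuinely needed here: first, convexity of the ball $\Bc$, so that the integration path stays inside the region on which the preceding lemmas establish the Lipschitz property of $\nabla f$; and second, the real inner-product convention $\Re\{\text{trace}((\cdot)^\herm(\cdot))\}$, without which neither the chain-rule step nor the linear term in the statement is unambiguous for a real-valued function of a complex matrix argument. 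So the route is the same one the paper outsources, but yours is self-contained and correctly adapted to the complex-matrix setting; nothing is missing.
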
 
\begin{proof}
	See \cite{bertsekas2015convex}, proposition 6.1.2. 
\end{proof}
This lemma is used as a tool to prove the convergence of PGD to a stationary point, as outlined by Theorem \ref{th:convergance}. 
\subsection{\textbf{Proof of Theorem \ref{th:convergance}}}
We start by replacing $\Um'$ with $\Um^{(t)}$ in \eqref{eq:descent} and defining the RHS of \eqref{eq:descent} as the proxy function
\begin{equation}\label{eq:proxy}
\scalebox{0.92}{$\fproxy^{(t)} (\Um)= f(\Um^{(t)}) + \langle \nabla f(\Um^{(t)}),\Um - \Um^{(t)}\rangle + \frac{L}{2}\Vert\Um-\Um^{(t)}  \Vert_{\sf F}^2,$}
\end{equation} 
at point $\Um^{(t)}$, such that we have
\begin{equation}\label{eq:proxy_bound}
f(\Um)\le \fproxy^{(t)} (\Um) 
\end{equation}
for all $\Um\in \Uc$ and 
$
f(\Um^{(t)})= \fproxy^{(t)} (\Um^{(t)}).
$
Now let us show that the point $\Um^{(t+1)}=\Pc_{\Uc} \left( \Um^{(t)} - \alpha_t \nabla f (\Um^{(t)})  \right)$ is indeed a minimizer of $\fproxy^{(t)}(\Um)$ over $\Uc$ with $\alpha_t=\frac{1}{L}$. To see this, note that we can expand $\fproxy^{(t)}(\Um)$ as
$ \fproxy^{(t)}(\Um) = \langle \nabla f(\Um^{(t)}),\Um\rangle-L \langle \Um^{(t)},\Um \rangle + {\sf const.}, $
for all unitary $\Um$. Then, minimizing $\fproxy^{(t)}(\Um) $ is equivalent to the maximization problem:
$\underset{\Um^\herm \Um = \mathbf{I}_{M}}{\text{maximize}}\,\, \langle \Um^{(t)} -\frac{1}{L}\nabla f(\Um^{(t)}),\Um \rangle$.
But the maximum of this objective is achieved at the point $\Um^\star = \Sm_t \Tm_t^\herm$, where $\Sm_t$ and $\Tm_t$ are matrices of left and right eigenvectors in the SVD form $ \Um^{(t)} -\frac{1}{L}\nabla f(\Um^{(t)}) = \Sm_t\Dm_t \Tm_t^\herm$. But this implies that $\Um^\star =  \Sm_t \Tm_t^\herm = \Pc_{\Uc} \left( \Um^{(t)} - \frac{1}{L}\nabla f (\Um^{(t)})  \right)=\Um^{(t+1)}$ and $\Um^{(t+1)}$ is a minimizer of $\fproxy^{(t)}(\Um)$. The chain of inequalities below immediately follows:
\begin{equation}\label{eq:ineqs}
\scalebox{0.92}{$f(\Um^{(t+1)})\overset{(a)}{\le } \fproxy^{(t)}(\Um^{(t+1)})\overset{(b)}{\le } \fproxy^{(t)}(\Um^{(t)})\overset{(c)}{= } f(\Um^{(t)}),$}
\end{equation}
where $(a)$ follows from \eqref{eq:proxy_bound}, $(b)$ follows from the fact that $\Um^{(t+1)}$ is aminimizer of $\fproxy^{(t)}(\Um)$, and $(c)$ is a result of $f(\Um^{(t)})= \fproxy^{(t)} (\Um^{(t)})$. Therefore we have
$f(\Um^{(t+1)}) \le f(\Um^{(t)}),
$
for $t=0,1,\ldots$ and since $f(\Um)$ is bounded from below, the gradient projection sequence $\{ \Um^{(t)},\, t=0,1,\ldots \}$  converges to a stationary point of $f(\Um)$. $\hfill \QED$
\begin{figure}[t]
	\centering
	\includegraphics[ width=0.37\textwidth]{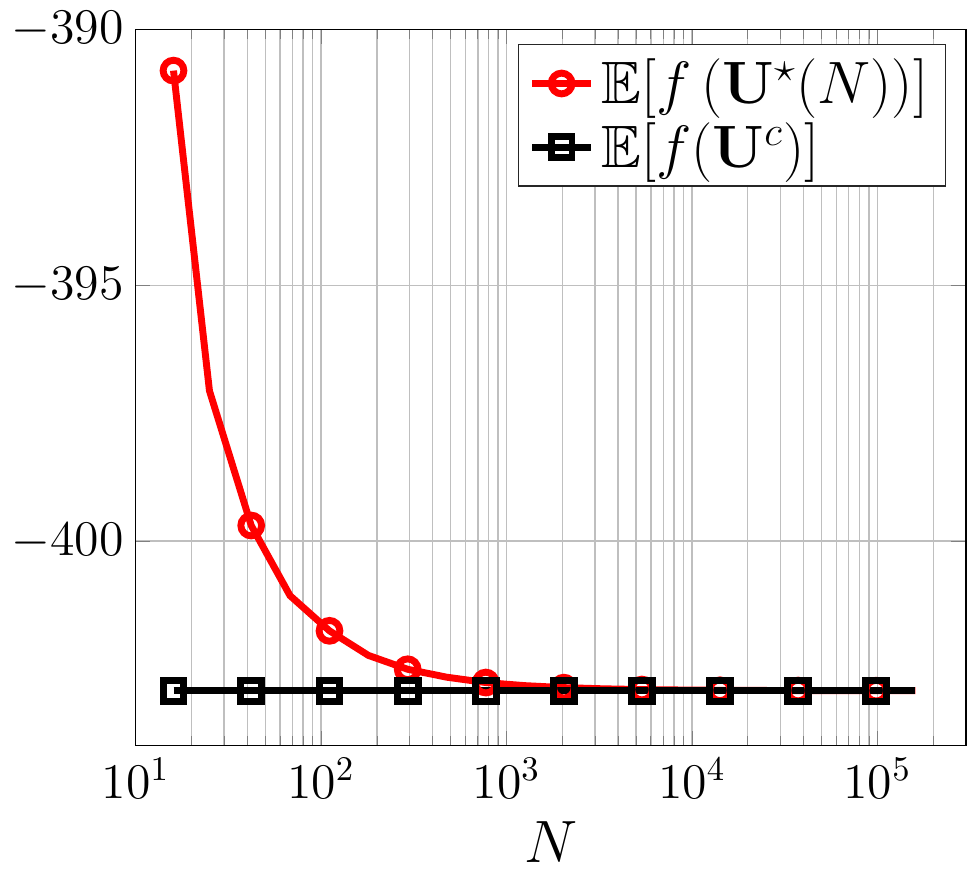}
	\caption{The average cost $f(\Um)$ as a function of the number of samples, for the solution of PGD $\Um^\star (N)$ and the global optimum $\Um^c$. We have set $M=16$ and $K=8$. }
	\label{fig:cost_global}
\end{figure} 
\section{Simulation Results}
To study the performance of our proposed method, in this section we provide some empirical results. 
\subsection{Jointly Diagonalizable Covariances}
The case of jointly diagonalizable covariances is especially interesting, as we know by Theorem \ref{thm_global_opt} that the global optimum of the ML problem is given by the shared CES $\Um^c$ (see \eqref{eq:commut_cov}). In order to assess the performance of our method, we compare the ML cost as a function of the number of random samples per process $N$, to the cost at the global minimum. Consider a signal dimension (number of antennas) $M=16$ and a number of processes (number of users) $K=8$. We generate a random unitary matrix as the CES $\Um^c$ by calculating the eigenvectors of a random matrix of size $M\times M$ with i.i.d complex Gaussian elements. Also, for each process $k\in [K]$, we generate a random vector of eigenvalues $\lambdav_k$ with i.i.d, positive elements given as $\lambdav_{k,m}=|\rho_m|$ where $\rho_m \sim \Nc (0,1)$. Then we form the covariance matrix of user $k$ as $\widetilde{\Sigmam}=\Um^c \text{diag}(\lambdav_k)\Um^{c\, \herm}$. We also normalize the covariances to have trace equal to one. This way we have randomly generated covariances with a shared CES. Now, having the covariances, we can generate random realizations for each process for different sample sizes $N$. We run a Monte Carlo simulation with $1000$ iterations, at each iteration generating covariances as stated above, then for each sample size $N$ we run our proposed PGD method which converges to a point $\Um^\star (N)$ (explicitly noting the dependence on $N$). Then, we average the cost function $\Um^\star (N)$ in \eqref{eq:ML_formula_2} over the Monte Carlo iterations for each value of $N$ and compare it to the average cost at the global optimum $f(\Um^c )$. Note that the latter of course is not a function of $N$. The PGD method is initialized with a random unitary matrix. Theorem \ref{th:convergance} guarantees convergence to a stationary point when the step size is chosen as $\alpha_t\in (0,1/L)$. However, practically we can be more ambitious by choosing larger step sizes as $\alpha_t = \frac{\alpha_0}{t},~t=1,2,\ldots$ with $\alpha_0 =2$ and our simulation results show that even with this choice, PGD converges.

Fig. \ref{fig:cost_global} illustrates the result. We denote the solution of our method with $\Um^\star (N)$, to explicitly highlight its dependence on the number of samples $N$. The interesting fact about this result is that, as the number of samples gets larger, the PGD with random initialization always converges to the global optimum, as its cost value is the same as that in the optimal point $\Um^c$. This is an empirical evidence for the convergence of our proposed method to the global solution for jointly diagonalizable covariances. 

\begin{figure}[t]
	\centering
	\includegraphics[ width=0.386\textwidth]{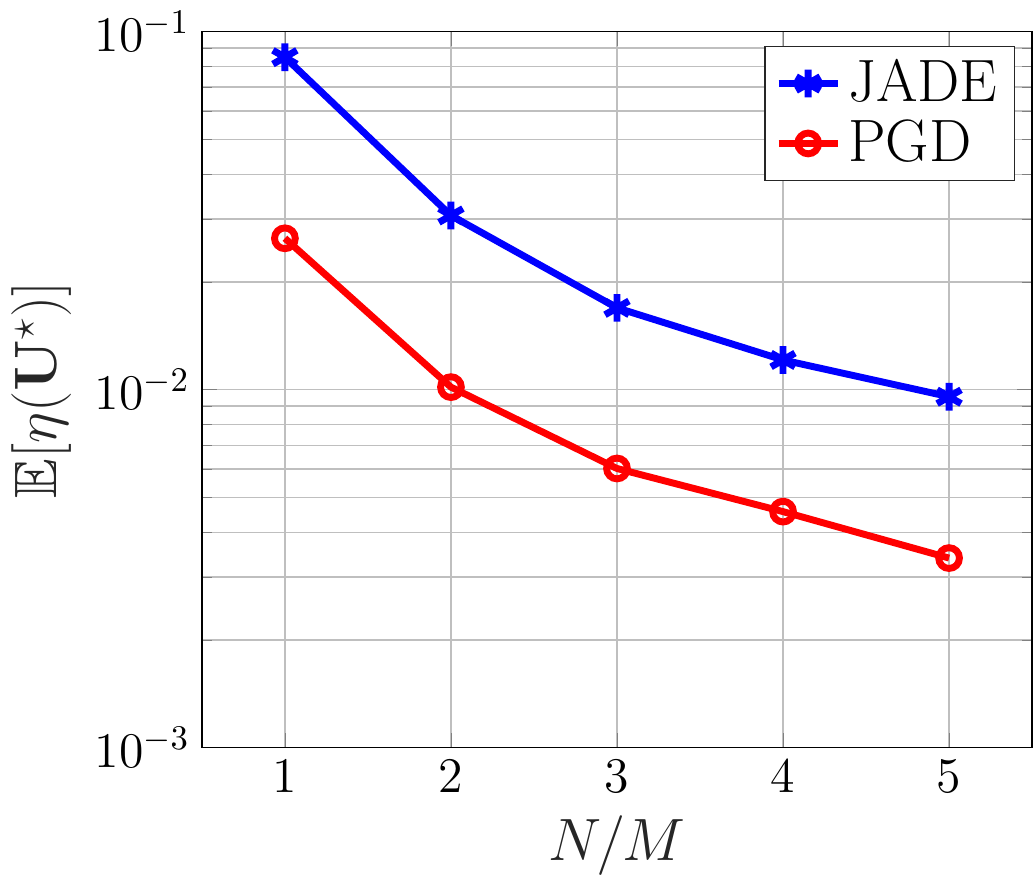}
	\caption{The average diagonalization metric as function of the sampling ratio $N/M$, for the solution of our proposed PGD method vs the JADE method. We set $M=16$ and $K=8$. }
	\label{fig:eta_func}
\end{figure} 
\subsection{Non-Jointly Diagonalizable Covariances}
As a different scenario, we
consider covariances that are not jointly diagonalizable. This is done by generating a different random unitary eigenvector matrix for each process separately as $\widetilde{\Sigmam}=\Um_k \text{diag}(\lambdav_k)\Um_k^{ \herm}$. The eigenvalue vectors $\lambdav_k$ are generated as before and we normalize the covariances to have unit trace. So, in this case we do not have a CES and the PGD method yields a unitary matrix that approximately jointly diagonalizes the covariances. Since we do not have the global optimum in this case, we compare our method to the JADE algorithm, which is a classic Jacobian-based method for joint covariance diagonalization (we do not explain the details of this method here due to space limitations and refer the reader to \cite{cardoso1993blind} for a full account). 
 
One way to measure the performance of joint diagonalization methods is by defining the metric $\eta : \bC^{M\times M}\to [0,1]$:
\begin{equation}\label{eq:eta}
\eta (\Um) = 1-\frac{1}{K}\sum_{k=1}^K\frac{\Vert \text{diag}(\Um^\herm \Sigmam_k\Um) \Vert}{\Vert \Sigmam_k \Vert_{\sf F}},
\end{equation}
where $\text{diag}(\cdot)$ with a matrix argument as in \eqref{eq:eta} denotes the $M$-dim vector of the diagonal elements of its argument. The smaller the value of $\eta (\Um)$ is, the better $\Um$ jointly diagonalizes the covariances. In the extreme case, If $\Um$ diagonalizes all covariance matrices, we have $\eta (\Um)=0$.

The joint diagonalization metric is empirically averaged over $1000$ Monte Carlo simulations and for different sample sizes for the solutions of our proposed PGD method and the JADE method. Fig. \ref{fig:eta_func} illustrates the results. It clearly shows that for the ranges of sample sizes considered here, the proposed PGD method outperforms the classic JADE method, yielding smaller values of the diagonalization metric on average, and hence achieving a better joint diagonalization of the covariances. 

\subsection{CES for ULA: PGD vs the Fourier Basis}
For a ULA it is usually taken for granted that the CES is given by the Fourier basis vectors. While this is true in an asymptotic sense thanks to the Szeg{\"o} theorem, it does not hold for small to moderate array sizes. We conclude our simulations by showing that, in fact the unitary basis produced by our proposed method better diagonalizes ULA covariances, compared to the DFT basis. 

\begin{figure}[t]
	\centering
	\includegraphics[ width=0.4\textwidth]{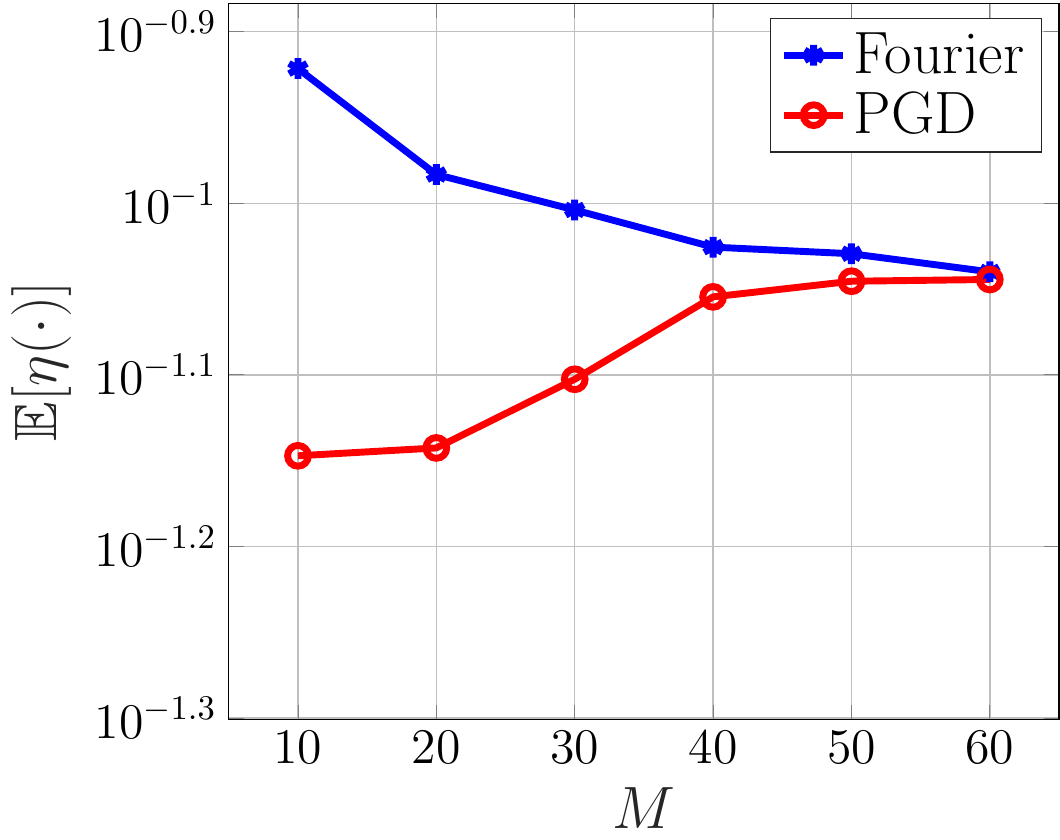}
	\caption{The average diagonalization metric as function of the number of antennas $M$, for the solution of our proposed PGD method vs the Fourier basis. Here we have set $K=5$. }
	\label{fig:U_vs_F}
\end{figure} 

We consider $K=5$ $M$-dimensional ULA covariances, generated randomly according to \eqref{eq:ch_cov_0} (see \cite{khalilsarai2018fdd} on randomly generating ULA covariances). Since each covariance is associated with a user that occupies a limited angular range as seen from the BS, we generate the covariances such that each of them has an ``effective" rank of $r_k=\text{effrank}(\widetilde{\Sigmam}_k)=\lceil\frac{M}{2} \rceil$. The effective rank is equivalent to the number of significant covariance eigenvalues as well as channel angular sparsity. 

We plot the expected joint diagonalization metric $\bE [ \eta (\cdot) ]$ as a function of the number of antennas for the unitary matrix yielded by our method as well as for the Fourier basis $\Fm$ where $[\Fm]_{m,n}=\frac{1}{\sqrt{M}}e^{j2\pi \frac{(m-1)(n-1)}{M}},\, m,n\in [M]$. We assume that the sample covariance has converged, i.e. $\widehat{ \Sigmam}_k = \widehat{ \Sigmam}_k $ for all $k$. The expectation $\bE [ \eta (\cdot) ]$ is taken over random covariance realizations and is calculated empirically over 100 Monte-Carlo loops. Fig. \ref{fig:U_vs_F} illustrates the result. As we can see, the basis given by the PGD method achieves better diagonalization (smaller $\eta$ values) than the Fourier basis, which shows that using our method is in fact preferable even for the diagonalization of ULA covariances. Also, as $M$ increases, the two bases have closer diagonalization metrics since we are approaching the asymptotic regime in which the Fourier basis approximately diagonalizes the Toeplitz ULA covariances. 

\section{Conclusion}
We presented a framework for estimating the common eigenstructure for a set of covariance matrices. Our approach was based on maximizing the likelihood function of the postulated common eigenvectors set, given random vector realizations. We proposed the PGD method and proved its convergence to a stationary point of the likelihood function. Our empirical results illustrated that the proposed method converges to the global optimum when the covariances indeed share a common eigenvectors set. It outperforms the classic JADE method in all ranges of sample sizes. It also performs better than the Fourier basis in diagonalizing covariances generated by a ULA geometry in a multi-user MIMO setup.

	{\small
		\bibliographystyle{IEEEtran}
		\bibliography{references}
	}
	
\end{document}